\title{On the Reduction of the Spherical Point-in-Polygon Problem for Antipode-Excluding Spherical Polygons}
\date{September 7, 2023}
\author[1,$\star$]{Ziqiang Li}
\author[2]{Jindi Sun}
\affil[1]{Department of Mathematics, Virginia Polytechnic Institute and State University, Blacksburg, VA 24061}
\affil[2]{Department of Biomedical Engineering, University of Arizona, Tucson, AZ 85721}
\affil[*]{Communicating author: \href{mailto:zqli@vt.edu}{\texttt{zqli@vt.edu}}}
\NewDocumentCommand{\E}{}{\mathbb E}
\NewDocumentCommand{\R}{}{\mathbb R}
\NewDocumentCommand{\Z}{}{\mathbb Z}
\DeclarePairedDelimiter{\opAbs}{|}{|}
\NewDocumentCommand{\abs}{s m}{\IfBooleanTF{#1}{\opAbs*{#2}}{\opAbs{#2}}}
\NewDocumentCommand{\set}{o m}{
  \IfValueTF{#1}{%
    \{#1:#2\}%
  }{%
    \{#2\}%
  }
}
\NewDocumentCommand{\runsto}{o}{%
  \IfValueTF{#1}{#1}{,}%
  {\cdots}%
  \IfValueTF{#1}{#1}{,}%
}
\NewDocumentCommand{\textie}{}{\textit{i.e.}\xspace}
\NewDocumentCommand{\texteg}{}{\textit{e.g.}\xspace}
\NewDocumentCommand{\textresp}{}{\textit{resp.}\xspace}
\NewDocumentCommand{\textcn}{}{\textbf{cn}\xspace}
\NewDocumentCommand{\textwn}{}{\textbf{wn}\xspace}
\newcommand*{\overrightharpoonup}{\mathpalette{\overarrow@\rightharpoonupfill@}}
\newcommand*{\rightharpoonupfill@}{\arrowfill@\relbar\relbar\rightharpoonup}
\DeclareFontFamily{U}{matha}{\hyphenchar\font45}
\DeclareFontShape{U}{matha}{m}{n}{
      <5> <6> <7> <8> <9> <10> gen * matha
      <10.95> matha10 <12> <14.4> <17.28> <20.74> <24.88> matha12
      }{}
\DeclareSymbolFont{matha}{U}{matha}{m}{n}
\DeclareMathSymbol{\varrightharpoonup}{3}{matha}{"E1}
\let\rightharpoonup\varrightharpoonup
\let\vector\overrightharpoonup
\DeclareFontFamily{OMX}{yhex}{}
\DeclareFontShape{OMX}{yhex}{m}{n}{<->yhcmex10}{}
\DeclareSymbolFont{yhlargesymbols}{OMX}{yhex}{m}{n}
\DeclareMathAccent{\wideparen}{\mathord}{yhlargesymbols}{"F3}
\let\arc\wideparen
\theoremstyle{definition}
\newtheorem{defn}{Definition}[section]
\newtheorem{prop}[defn]{Proposition}
\newtheorem{remk}[defn]{Remark}
\newtheorem{corr}[defn]{Corollary}
\let\boundary\partial
\let\interior\mathring
\definecolor{codegreen}{rgb}{0,0.6,0}
\definecolor{codegray}{rgb}{0.5,0.5,0.5}
\definecolor{codepurple}{rgb}{0.58,0,0.82}
\definecolor{backcolour}{rgb}{0.95,0.95,0.92}
\lstdefinestyle{mystyle}{
    backgroundcolor=\color{backcolour},   
    commentstyle=\color{codegreen},
    keywordstyle=\color{magenta},
    numberstyle=\tiny\color{codegray},
    stringstyle=\color{codepurple},
    basicstyle=\ttfamily\footnotesize,
    breakatwhitespace=false,         
    breaklines=true,                 
    captionpos=b,                    
    keepspaces=true,                 
    numbers=left,                    
    numbersep=5pt,                  
    showspaces=false,                
    showstringspaces=false,
    showtabs=false,                  
    tabsize=2
}
\begin{document}

\pagenumbering{arabic}

\maketitle

\begin{abstract}
  Spherical polygons used in practice are nice,
  but the spherical point-in-polygon problem (SPiP)
  has long eluded solutions based on the winding number (\textwn).
  That a punctured sphere is simply connected is to blame.
  As a workaround,
  we prove that requiring the boundary of a spherical polygon
  to never intersect its antipode
  is sufficient to reduce its SPiP problem
  to the planar, point-in-polygon (PiP) problem,
  whose state-of-the-art solution
  uses \textwn and does not utilize known interior points (KIP).
  We refer to such spherical polygons as boundary antipode-excluding (BAE)
  and show that all spherical polygons fully contained
  within an open hemisphere is BAE.
  We document two successful reduction methods,
  one based on rotation and the other on shearing,
  and address a common concern.
  Both reduction algorithms,
  when combined with a \textwn-PiP algorithm,
  solve SPiP correctly and efficiently for BAE spherical polygons.
  The MATLAB code provided demonstrates
  scenarios that are problematic for previous work.
\end{abstract}

\clearpage

\renewcommand{\baselinestretch}{0.5}\normalsize
\tableofcontents
\renewcommand{\baselinestretch}{1.0}\normalsize

\section{Statements}

A spherical polygon is a region on a sphere
bounded by arcs of great circles.

Suppose $G$ is a spherical polygon on the unit sphere $S^2$
with an orientable boundary $\boundary G$.
Given any test point $Q\in S^2$,
we decide on one outcome among $Q\in\boundary G$, $Q\in\interior G$, or $Q\not\in G$;
namely, is $Q$ on the boundary, in the interior, or in the exterior of $G$?

\begin{figure}[!hbtp]
  \centering
  \includegraphics{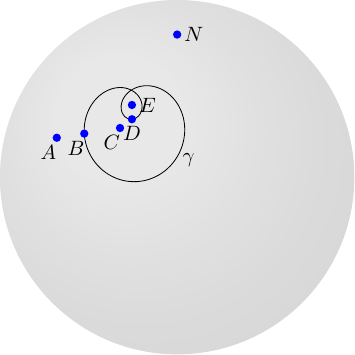}
  \caption{More generally,
           let $S^2$ be the unit sphere, $N$ be the north pole,
           and $\gamma\subset S^2$ be the depicted non-simple loop
           with $N$ in the exterior.
           Then $A$ is in the exterior,
           $B$, $D$ are on the boundary, and
           $C$, $E$ are in the interior.}
  \label{fig:notion}
\end{figure}

In comparision,
the point-in-polygon (PiP) problem is stated as follows.
Suppose $G'$ is a polygon in $\E^2$
with an orientable boundary $\boundary G'$.
Without loss of generality, let the origin $O\in\E^2$ be the test point.
We decide on one outcome among $O\in\boundary G'$,
$O\in\interior G'$, or $O\not\in G'$,
\textie,
is the origin on the boundary, in the interior, or in the exterior of $G'$?

As the title suggests,
we seek a sufficient condition for a spherical polygon
so that its SPiP-outcome is decided based on
both the PiP-outcome and said condition.

\subsection{Review}

Humanity thrives on the round Earth
and makes important decisions on its surface.
A type of question we frequently answer is:
Is this point $Q$ inside this region $\Omega\subset S^2$
specified by its boundary $\gamma=\boundary\Omega$?
Developing a reliable spherical point-inclusion test
is thus crucial to geographic information systems (GIS)
and numerical simulation on spheres \citep{Li2018Diffusion,Li2023Diffusion}
but sporadic attention has been given to it.
Given that $\Omega$ is typically approximated
and preserved digitally as a spherical $n$-gon $G$,
we have the spherical point-in-polygon (SPiP) problem.

The earliest known crossing-number (\textcn) algorithm
for SPiP with simple boundary is proposed in \citet{Bevis1989Locating}.
Their algorithm uses a known interior point (KIP) $P\in\interior G$,
\textie, $P$ is known \emph{a priori} to be an interior point of $G$.
During the test,
the unique arc of great circle $PQ$ intersects each side of $G$,
and each intersection counting towards \textcn as one crossing.
The odd-even parity of \textcn is used to discriminate between
$Q\in\interior G$ and $Q\not\in G$.
However, this method offers two rooms for improvement.
First, the algorithm fails when the arc $\arc{PQ}$ is not well-defined,
thereby requiring a second KIP.
Secondly and more dramatically, when $Q$
lies on the extension of any side of $G$,
it is unclear how the intersection should count towards \textcn.

Recent progress addressing both issues is documented in \cite{Ryan2022Robust}.
Again, based on the crossing number (\textcn)
and one known interior point $P\in\interior G$,
their algorithm computes for each vertex of $G$
an azimuthal angle relative to $P$,
and all test points will be handled
by intersecting $\arc{PQ}$ with $G$ in the $P$-north, rotated coordinate system.
Their algorithm removed the extension restriction on $Q$.

\begin{figure}[!hbtp]
  \centering
  \includegraphics{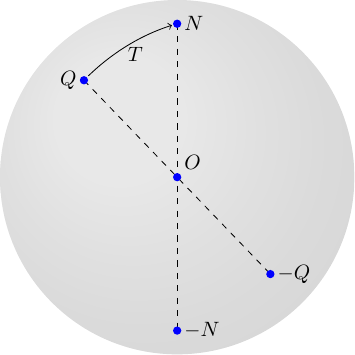}
  \caption{The $Q$-north coordinate system
           can be produced by rotating $Q$ along $\arc{QN}$
           to $N$. This rotation
           induces a linear transformation $T$.
           If $Q$ is the south pole,
           any $180^\circ$ rotation would suffice.}
  \label{fig:rotation}
\end{figure}

So far, no formulation has been provided based on the winding number \textwn
of $\boundary G$ around $Q$.
The (in)-convenient fact that ``a punctured sphere is simply connected''
is to blame,
\textie, $\pi_1(S^2\setminus\set{N})$ is a trivial fundamental group.
It must be acknowledged that \textcn and \textwn
result in different notions of interiority,
with the \textwn being more consistent and conceptually elegant
with several authors lobbying for their
preference over \textcn,
such as \citep{Sunday2021Practical} and the authors of this paper.
We make no further comment on their distinctions.

\subsection{Winding number}

Winding number (\textwn)
is intuitively defined as the number of times
a loop $\gamma$ in the Euclidean plane $\E^2$
wraps around a non-boundary test point, \texteg, $O\subset\E^2$
in a conventional direction.
In algebraic topology,
we may define ``completion of a counterclockwise loop around the test point''
to be the generating equivalence class $[l]\in\pi_1(S^2\setminus\set{O})$
in the fundamental group $\pi_1(S^2\setminus\set{O})\cong\Z$,
thereby establishing that $[l]^{\textwn}=[\gamma]$.
The \textwn is usually calculated by
\begin{equation}
  \textwn = \frac{1}{2\pi} \oint_\gamma d\theta
          = \frac{1}{2\pi}\sum_{i=1}^{n} \Delta\theta_{i},
\end{equation}
where $\theta(x)$ is the polar coordinate of a point $x\in\gamma$
and the summation is for any polygon that is homotopic to $\gamma$.
In practical computer algorithms,
\textwn is computed using a crossing test (refer to \cite{Sunday2021Practical})
which is significantly faster than summing up the signed angles
$\Delta\theta_{i}$, ($i=1\runsto n$) spanned by each side of the polygon,
which requires the use of inverse trigonometric functions.

\subsection{Spherical polygons}

In general, a spherical polygon need not be convex,
its boundaries intersect or even overlap with itself,
and its sides may be larger than a semicircle.
Under practical circumstances,
spherical polygons are restricted to be
convex, simple, with each side less than a semicircle.
Any contiguous part of any country on Earth is fully contained within an open hemisphere.
Spherical triangulations uses spherical triangles
where dyadic refinement keeps subdividing spherical triangles
into having exponentially smaller area \citep{Baumgardner1985Icosahedral}.
It is thus important to investigate criteria for small spherical polygons
that will facilitate a \textwn-SPiP algorithm
that fully utilizes a PiP algorithm as a subroutine.

\section{Practical notion of small spherical \texorpdfstring{$n$}{n}-gons}

Assuming $Q\not\in\boundary G$,
we first deal with $\pi_1(S^2\setminus\{Q\})\cong\{0\}$.
Noting that $\pi_1(S^2\setminus\{\pm Q\})\cong\Z$,
we seek a sufficiently general property of $G$
such that whenever $\pm Q\not\in G$,
the reduction algorithm would identify PiP-outcomes as SPiP-outcomes.
This would isolate $\pm Q\in G$ for further discrimination.
We thus explore antipode-exclusion criteria for $G$,
where the basic premise is to assert that $-Q\in G$ implies $Q\not\in G$.
Suppose $G\subset S^2$ is a spherical polygon
and $Q\in S^2$ is a test point.
\begin{defn}[Antipode]\label{defn:antipode}
  The mirror image of $Q$ across the center of $S^2$
  is called the antipode of $Q$ and written as $-Q$.
\end{defn}

\begin{defn}[Antipode Exclusion]\label{defn:ae}
  $G$ satisfies the antipode-excluding (AE) property
  if whenever $Q\in G$, its antipode $-Q\not\in G$.
\end{defn}

\begin{defn}[Boundary AE]\label{defn:bae}
  $G$ satisfies the boundary antipode-excluding (BAE) property
  if whenever $Q\in\partial G$,
  its antipode $-Q\not\in\partial G$.
\end{defn}

\begin{figure}[!hbtp]
  \centering
  \includegraphics{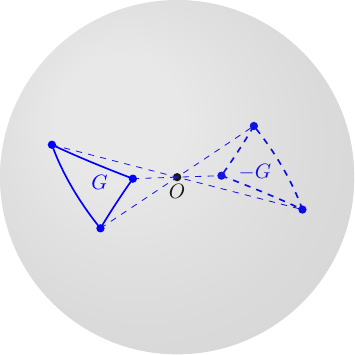}
  \caption{An example of an AE spherical trianglepolygon $G$
           which does not intersect its antipode $-G$.
           $G$ is further BAE.
           $G$ is significantly
           less than an open hemisphere in terms of area.
           $\pm G$ occupy antipodal open hemispheres.}
  \label{fig:ae}
\end{figure}

\begin{remk}
  \label{remk:aebae}
  The AE property of $G$ is equivalent to $G\cap -G=\emptyset$.
  The BAE property is equivalent to $\partial G\cap(-\partial G)=\emptyset$.
  By definition, AE implies BAE.
\end{remk}

A BAE spherical polygon $G$ is practical in the sense that
it allows the following discrimination between $\pm Q\in\partial G$:
\begin{itemize}
  \item If it is further determined that $Q\in\partial G$, halt.

  \item If it is further determined that $-Q\in\partial G$,
  declare that $Q\not\in G$.
\end{itemize}
Therefore, a robust SPiP algorithm for BAE spherical polygons
would first discriminate $\pm Q\in\partial G$
and then perform PiP on a \textwn-preserved polygonal projection $G'\subset\E^2\setminus\{O\}$
of $G\approx\gamma\subset S^2\setminus\set{\pm Q}$ onto $\E^2$.
Before we develop such an algorithm,
we focus on an even more practical notion
for small spherical polygons.

\begin{defn}[Containment by Open Hemisphere]\label{prop:hemisphere}
  Let $G\subset S^2$ be a spherical polygon.
  Suppose $G$ is containable in an open hemisphere.
  Then we say that $G$ hemisphere-contained (HC).
\end{defn}

\begin{prop}[$\text{HC}\implies\text{AE}\implies\text{BAE}$]\label{prop:hc2ae}
  Let $G\subset S^2$ be a HC spherical polygon.
  Then $G$ is AE and further BAE.
\end{prop}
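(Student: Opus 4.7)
The plan is to exploit Remark~\ref{remk:aebae}, which says AE implies BAE, and thus reduce the proposition to showing HC~$\implies$~AE, i.e., that $G \cap (-G) = \emptyset$ whenever $G$ lies in some open hemisphere.

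First, I would make the notion of open hemisphere concrete by coordinates. An open hemisphere $H \subset S^2$ is, by definition, one of the two connected components of $S^2$ minus some great circle; equivalently, there is a unit vector $n \in S^2$ (the pole of $H$) such that
\begin{equation}
  H = \set[x \in S^2]{x \cdot n > 0}.
\end{equation}
The antipodal hemisphere is then $-H = \set[x \in S^2]{x \cdot n < 0}$, and the strict inequalities immediately give $H \cap (-H) = \emptyset$ since no point $x \in S^2$ can simultaneously satisfy $x\cdot n>0$ and $x\cdot n<0$.

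Next, I would observe that the antipodal map $x \mapsto -x$ is a set bijection on $S^2$ that carries $H$ onto $-H$. Therefore, if $G \subseteq H$ then $-G \subseteq -H$, so
\begin{equation}
  G \cap (-G) \;\subseteq\; H \cap (-H) \;=\; \emptyset.
\end{equation}
By Remark~\ref{remk:aebae}, this is exactly the statement that $G$ is AE, and then BAE follows from AE, again by the same remark.

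I do not anticipate a genuine obstacle here; the content is entirely the half-space description of an open hemisphere together with equivariance of the antipodal map. The only thing worth being careful about is that the hemisphere must be \emph{open}, since a closed hemisphere $\overline H = \set[x]{x \cdot n \ge 0}$ meets its antipode along the equatorial great circle $\set[x]{x\cdot n = 0}$, and a polygon $G$ touching that equator could have $G \cap (-G) \ne \emptyset$; the strict inequality in Definition~\ref{prop:hemisphere} is precisely what rules this out.
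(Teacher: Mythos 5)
Your proposal is correct and follows essentially the same route as the paper's own (very terse) proof: the paper likewise argues that every antipode of a point in an open hemisphere lies outside that hemisphere, so any subset $G$ of an open hemisphere satisfies $G\cap(-G)=\emptyset$, i.e.\ is AE, and BAE then follows from Remark~\ref{remk:aebae}. Your coordinate formulation with the pole vector $n$ and the strict inequality merely makes the paper's one-line observation explicit.
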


For proof, we note that all antipodes of an open hemisphere are exterior to it,
so any closed subset of an open hemisphere is certainly also AE.
For an end-user, algorithms for BAE spherical polygons
are directly applicable to hemisphere-contained spherical polygons
which describes the most common use case.
We conclude this section with two conveniences.
The first removes sides that are too long from consideration,
which justifies conventional restrictions
on the length of arcs \citep[\S1, Art.~4, p.~2]{Todhunter1886Spherical}.

\begin{corr}
  \label{corr:arc}
  Let $G\subset S^2$ be a BAE spherical polygon.
  Then all sides of $G$ are less than a semicircle.
\end{corr}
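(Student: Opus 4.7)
The plan is to prove Corollary 2.6 by contrapositive: I will show that if some side $s$ of $G$ has arclength $\ell \geq \pi$, then $s$ contains a pair of antipodal points, both lying on $\partial G$, which violates the BAE property of $G$.

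First, I recall that every side of a spherical polygon is, by definition, an arc of a great circle on $S^2$. Parametrize the great circle containing $s$ by $\theta \in [0, 2\pi)$ via unit-speed arclength. Under this parametrization, the antipode of the point at angle $\theta$ is precisely the point at angle $\theta + \pi \pmod{2\pi}$, since great-circle antipodes sit diametrically opposite on the same great circle. Thus the antipodal map acts on the parameter by the half-turn shift $\theta \mapsto \theta + \pi \pmod{2\pi}$.

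Next, write $s$ as the image of an interval $[\theta_0, \theta_0 + \ell]$. If $\ell = \pi$, then the two endpoints are at angles $\theta_0$ and $\theta_0 + \pi$, which are antipodal; both are endpoints of a side, hence both lie on $\partial G$, so $\partial G \cap (-\partial G) \neq \emptyset$. If $\ell > \pi$, then any point $Q = \theta_0$ has antipode $-Q$ at parameter $\theta_0 + \pi$, which lies strictly inside $[\theta_0, \theta_0+\ell]$; so both $Q$ and $-Q$ belong to $s \subset \partial G$, again contradicting BAE via Remark 2.4. Either way, we obtain the desired contradiction, and thus every side must satisfy $\ell < \pi$.

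I do not anticipate a serious obstacle here; the only care needed is the boundary case $\ell = \pi$, where the antipodal pair consists of the two endpoints of the side rather than interior points. These endpoints are shared vertices of $G$ and therefore unambiguously on $\partial G$, so the BAE hypothesis rules them out too. The statement of BAE as $\partial G \cap (-\partial G) = \emptyset$ from Remark 2.4 makes this last step immediate, so the proof reduces to the elementary fact that an arc of a great circle of length at least $\pi$ contains a pair of antipodal points.
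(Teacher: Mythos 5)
Your proof is correct and follows essentially the same route as the paper's: both arguments reduce to the observation that a closed great-circle arc of length at least $\pi$ contains a pair of antipodal points of $\partial G$, contradicting the BAE property. Your version merely makes the parametrization and the boundary case $\ell=\pi$ explicit where the paper invokes ``contains a closed semicircle, which has antipodal endpoints.''
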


\begin{proof}
  By contradiction.
  Suppose $G$ contains a side at least as large as a semicircle,
  then this side, which is closed, must contain
  a closed semicircle, which has antipodal endpoints.
\end{proof}

The second convenience is a straightfoward discrimination
algorithm upon $\pm Q\in\boundary G$.
This occurs when the PiP-outcome is $O\in\partial G'$,
and is handled by \Cref{prop:discrimination}.
Various proofs are left to the reader
with a sketch provided as \Cref{fig:discrimination}.

\begin{prop}[Disambiguation]\label{prop:discrimination}
  Let $AB\subset S^2$ be an arc of great circle less than a semicircle,
  and let $Q\in S^2$ be a point.
  Suppose arc $AB$ intersects the straight line uniquely determined by $\pm Q$,
  then the midpoint $M$ of the chord $AB$ is in the same (\textresp, antipodal)
  open hemisphere centered at $\pm Q$ if $\angle MOQ$ is acute or zero
  (\textresp, is obtuse or flat).
\end{prop}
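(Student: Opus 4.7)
My plan is to reduce the spherical disambiguation to a one-line sign check on the Euclidean inner product $M\cdot Q$ in $\R^3$. The one preliminary observation is that the chord midpoint $M=\tfrac{1}{2}(A+B)$ is nonzero: since arc $AB$ is less than a semicircle, $A$ and $B$ are not antipodal, so $A+B\neq 0$, and hence $\angle MOQ$ is well-defined as the angle between two nonzero vectors in $\R^3$. (This is also consistent with \Cref{corr:arc}.)

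Next, I would identify the open hemisphere of $S^2$ centered at $Q$ with the open half-space $\set[X\in\R^3]{X\cdot Q>0}$ via the radial projection $X\mapsto X/|X|$ on $\R^3\setminus\set{O}$. Under this identification, $M$ lies in the same (\textresp, antipodal) open hemisphere centered at $\pm Q$ iff $M\cdot Q>0$ (\textresp, $M\cdot Q<0$). Combining this with the geometric identity $M\cdot Q=|M||Q|\cos\angle MOQ$ and the fact that $|M|,|Q|>0$, the sign of $M\cdot Q$ tracks the sign of $\cos\angle MOQ$. This yields both implications: $\angle MOQ$ acute or zero forces $M\cdot Q>0$ and thus places $M$ in the same hemisphere as $Q$, while $\angle MOQ$ obtuse or flat forces $M\cdot Q<0$ and thus places $M$ in the antipodal hemisphere.

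The only potential obstacle is verifying that $M\neq O$ so that $\angle MOQ$ is defined, which the less-than-semicircle hypothesis handles immediately; everything else is a direct property of the Euclidean inner product. I would also note that the hypothesis about the line through $\pm Q$ intersecting arc $AB$ plays no role in the above chain of equivalences; rather, it provides the application context for the proposition by forcing exactly one of $\pm Q$ onto the arc, so that the dot-product test on $M$ tells us which.
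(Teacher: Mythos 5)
Your reduction of hemisphere membership to the sign of $M\cdot Q$ is fine as far as it goes, and the observation that $M\neq O$ follows from the sub-semicircular hypothesis is correct. But the chain you prove --- $\angle MOQ$ acute iff $M\cdot Q>0$ iff $M$ in the open half-space of $Q$ --- is essentially the definition of the angle and of the hemisphere; it is the tautological half of the proposition. The substantive content, which is what \Cref{fig:discrimination} sketches and what Line 10 of \Cref{alg:primary} and the test $[a,b,c][x_i'+x_j',y_i'+y_j',z_i'+z_j']^\top>0$ in \Cref{alg:shearing} actually rely on, is the link between the sign test and \emph{which} of $\pm Q$ lies on the arc: if $Q\in\arc{AB}$ then $M$ lies in the open hemisphere centered at $Q$ (so $\angle MOQ$ is acute or zero), and if $-Q\in\arc{AB}$ then $M$ lies in the antipodal hemisphere. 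You explicitly set this aside by declaring that the hypothesis ``arc $AB$ intersects the line through $\pm Q$'' plays no role; in fact that hypothesis, together with ``less than a semicircle,'' is exactly what makes the sign test decide the question, and without the length restriction the implication is false ($Q$ can lie on a major arc on the far side of the origin from $M$, giving $M\cdot Q<0$ even though $Q\in\arc{AB}$).

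The missing step is short: work in the plane of the great circle through $A$ and $B$, and write $A$ and $B$ symmetrically about the unit vector $M/\abs{M}$, say $A,B=(\cos\delta)\tfrac{M}{\abs{M}}\pm(\sin\delta)u$ with $u\perp M$ and $0<\delta<\tfrac\pi2$ (here $2\delta$ is the arc length and $\abs{M}=\cos\delta$). Any point $P$ of $\arc{AB}$ is $(\cos t)\tfrac{M}{\abs{M}}+(\sin t)u$ with $\abs{t}\leqslant\delta$, so $P\cdot M=\cos t\cos\delta>0$. Applying this to the intersection point guaranteed by the hypothesis shows that whichever of $\pm Q$ lies on the arc has strictly positive inner product with $M$, which combined with your dot-product dictionary completes the disambiguation and also rules out the right-angle case. (The paper leaves the proof to the reader with only \Cref{fig:discrimination} as a sketch, so there is no written argument to compare against; but the sketch --- ``$M$ must then be closer to $Q$ than $-Q$'' --- makes clear that this positivity claim, not the inner-product tautology, is the point.)
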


\begin{figure}[!hbtp]
  \centering
  \includegraphics{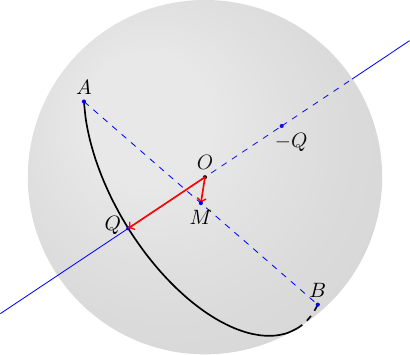}
  \caption{Visualization of \Cref{prop:discrimination}
           when $Q$ lies on the arc $AB$
           where $AB$ is smaller than a semicircle.
           The midpoint $M$ of the chord $AB$
           must then be closer to $Q$ than $-Q$.
           Exchange the labels $\pm Q$ to visualize the other case,
           where $-Q$ lies on the arc $AB$ and $M$ must be closer to $-Q$ than $Q$.}
  \label{fig:discrimination}
\end{figure}

\section{Reduction Algorithms and Homotopy}

Given a test point $Q\in S^2$ and a boundary $\gamma=\partial G\subset\S^2$,
this section provides an overview
of the SPiP-to-PiP reduction algorithm when $\pm Q\not\in\gamma$.
The main idea is to project $S^2$ onto some plane
such that the concept of winding (in this case, homotopy classes, \textwn)
is preserved.
We first acknowledge that winding breaks down upon $\pm Q\in\gamma$,
but it is caught by \Cref{prop:discrimination}.
In the footsteps \cite{Bevis1989Locating} and \cite{Ryan2022Robust},
we first investigate the $Q$-north rotation of $S^2$,
\textie, rotating $S^2$ such that $Q$ becomes the north pole.
We then go off the beaten track
and illustrate why an alternative shearing-based projection
fails to preserve $\textwn$.

\subsection{Rotation-based projection}

The technical part of our SPiP algorithm
is fully outsourced to PiP algorithms
thanks to the following maneuver.
Let $[x_i,y_i,z_i]^\top\in\E^3$ be the Cartesian coordinates
of the vertices $v_i$, $i=1\runsto n$,
where $v_i$ are successive vertices of the spherical $n$-gon $G\subset S^2$.
Let $(\theta,\phi)$ describe the polar angle $\theta\in[0,\pi]$
and the azimuthal angle $\phi\in\R$ of the test point $Q$.
For ease of discussion,
let $N$ and $-N$ be the north and south pole, respectively.
Remember that this part of the algorithm is executed
with the assertion that $\pm Q\not\in G$ in mind.

We perform a rotation of $S^2$ along the arc $QN$ such that $Q$ becomes $N$.
(This rotation is trivial if $Q=S$.)
This rotation induces a linear transformation $T:\E^3\to\E^3$
on the Cartesian coordinates of points on $S^2$,
\textie, $T([x_i,y_i,z_i]^\top)=\mathbf R[x_i,y_i,z_i]^\top=[x_i',y_i',z_i']^\top$,
where $\mathbf R\in SU(3)$ is a rotation matrix.
Because $\pm Q\not\in G$, we see that $T(\pm Q)=\pm N\not\in T(G)$.
We then project the rotated coordinates onto the Euclidean plane $\E^2$ by
$\pi:\E^3\to\E^2$, $[x_i',y_i',z_i']^\top\mapsto [x_i',y_i']^\top$,
noting that $[x_i',y_i']^\top\ne O$ is guaranteed.
We thus claim the following reduction.

\begin{figure}[!hbtp]
  \centering
  \includegraphics[width=.9\textwidth]{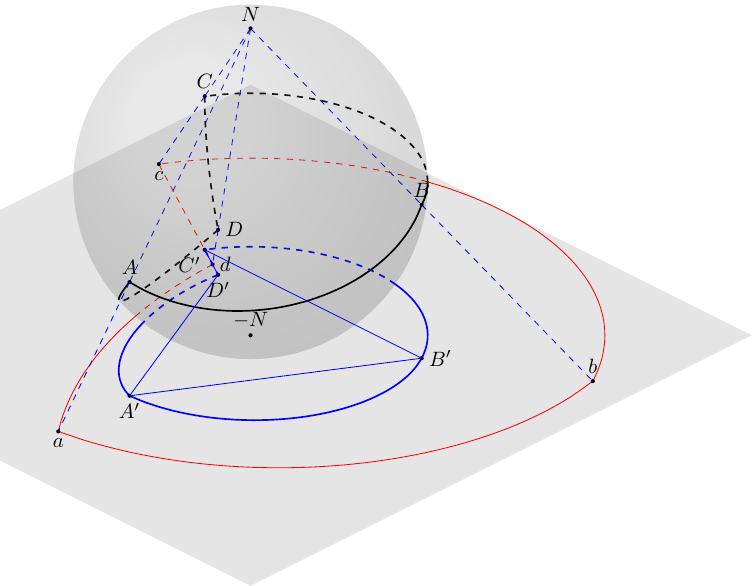}
  \caption{The projection $\pi$ preserves homotopy classes
           between $\pi_1(S^2\setminus\set{\pm N})$
           and $\pi_1(D^2\setminus\set{O})$.
           The south-polar stereographic projection
           of $G$ is depicted as the loop $abcda$.
           Note that $N$ is on the great circle determined by $\arc{CD}$,
           so both $cd$ and $C'D'$ are straight line segments that are
           collinear with $-N$.}
  \label{fig:equivalence}
\end{figure}

\begin{prop}[Rotation-based reduction]
  Let $G\subset S^2$ be a BAE spherical polygon
  with vertices $v_1\runsto v_n$.
  Suppose $Q\in S^2$ is not a boundary point of $G$.
  Then the outcome of \textwn-SPiP for $G$ around $Q$
  (\textie, $Q\in\boundary G$, $Q\in\interior G$, $Q\not\in G$)
  is identical to the outcome of \textwn-PiP
  for the polygon $G'$
  with vertices $(\pi\circ T)(v_i)$, $i=1\runsto n$ around $O$
  (\textie, $O\in\boundary G'$, $O\in\interior G'$, $O\not\in G'$).
\end{prop}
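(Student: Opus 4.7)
My plan is to factor $\pi\circ T$ into its two stages and follow the \textwn-indicator across each.

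\emph{Stage I (rotation).} Since $T$ is a rotation of $S^2$, it is an orientation-preserving isometry commuting with the antipodal map, so it carries $G$ bijectively to a BAE spherical polygon $T(G)$ and preserves interior, boundary, and exterior. The SPiP outcome of $G$ around $Q$ therefore equals that of $T(G)$ around $N$, and $N\notin\boundary T(G)$ by hypothesis. The companion case $-N\in\boundary T(G)$ (\textie, $-Q\in\boundary G$) is caught by \Cref{prop:discrimination} before rotation-based reduction is executed, so I would assume $\pm N\notin\boundary T(G)$ throughout.

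\emph{Stage II (projection, geometric).} I would verify that the planar polygon $G'$ with vertices $\pi(T(v_i))$ and straight-line sides is contained in $\E^2\setminus\set{O}$. Its vertices avoid $O$ because $T(v_i)\neq\pm N$. A chord $\pi(T(v_i))\pi(T(v_{i+1}))$ contains $O$ iff its endpoints are antiparallel in $\E^2$, equivalently iff the plane $\mathrm{span}(T(v_i),T(v_{i+1}))$ contains the $z$-axis with $T(v_i)$ and $T(v_{i+1})$ at opposite longitudes. This forces the minor arc between them (less than a semicircle by \Cref{corr:arc}) to pass through $N$ or $-N$, contradicting stage~I. Hence $\boundary G'\subset\E^2\setminus\set{O}$ and $\textwn(\boundary G',O)$ is well defined.

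\emph{Stage III (projection, topological).} I would equate $\textwn(\boundary G',O)$ with the \textwn-SPiP indicator, namely the class of $\boundary T(G)$ in $\pi_1(S^2\setminus\set{\pm N})\cong\Z$. Since $S^2\setminus\set{\pm N}$ deformation-retracts onto its equator (which generates $\pi_1$) and $\pi$ sends this equator identically onto the unit circle (which generates $\pi_1(\E^2\setminus\set{O})$), the induced map $\pi_*$ is the identity on $\Z$, so $[\boundary T(G)]=\textwn(\pi(\boundary T(G)),O)$. A straight-line homotopy then replaces each curved projected side $\pi(\arc{T(v_i)T(v_{i+1})})$ by its chord: the trigonometric parametrization of a spherical arc shows that its projection is a non-negative combination of the projected endpoints, and so is every interpolate with the chord; such a combination vanishes only under the antiparallel condition ruled out in stage~II, so the homotopy stays in $\E^2\setminus\set{O}$ and the two winding numbers coincide.

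The main obstacle is the chord-through-$O$ analysis of stage~II, the algebraic crux that ties planar antiparallelism to spherical passage through $\pm N$; once secured, the topological invariance in stages~I and~III is routine.
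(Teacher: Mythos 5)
Your proof is correct, but it takes a genuinely different route from the paper's. The paper factors the comparison through the south-polar stereographic projection $\tau$: it treats $\tau$ as a homeomorphism $S^2\setminus\set{\pm N}\to\E^2\setminus\set{O}$, connects $\pi(T(G))$ to $\tau(T(G))$ by a radial homotopy $f_t$, and disposes of the chord-versus-arc discrepancy by noting that the region bounded by a projected arc and its chord has a simply connected preimage under $\tau$ (while an edge through $O$ comes from a longitudinal arc, so arc and chord project onto the same segment). You bypass $\tau$ entirely: you read off $\pi_*$ on $\pi_1$ directly from the equator, and you justify the straightening homotopy by the slerp observation that every point of a minor arc is a non-negative combination of its endpoints, so the projected arc, the chord, and every interpolate lie in the convex cone spanned by the projected endpoints, which misses $O$ unless those endpoints are antiparallel --- precisely the case your Stage II ties to a meridian arc through $\pm N$ and excludes (using \Cref{corr:arc} to know the side is the minor arc). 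Your version is more self-contained --- the paper's displayed formula for $\tau$ is not in fact injective as written, so avoiding it is safer --- and your Stage II makes explicit the equivalence between $O\in\boundary G'$ and $\pm N\in\boundary T(G)$ that the paper leaves implicit; what the paper's route buys is a single global homeomorphism under which the identification of homotopy classes is tautological. One shared caveat: the proposition as stated assumes only $Q\notin\boundary G$, yet both arguments really need $\pm Q\notin\boundary G$; you say this explicitly in Stage I and defer the case $-Q\in\boundary G$ to \Cref{prop:discrimination}, which matches the algorithm's intent but slightly strengthens the stated hypothesis.
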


\begin{proof}
  We show through construction that $\pi$ preserves \textwn
  from $S^2\setminus\set{\pm N}$ to $D^2\setminus\set{O}$,
  where $D^2$ is the closed unit disk.

  Let $\tau:S^2\setminus\set{-N}\to\E^2$
  be the south-polar stereographic projection 
  (specified by \citet[p.~158]{Snyder1987Map} but with different coordinates)
  \begin{equation}
    \tau([x,y,z]^\top) = (2\cot\tfrac\theta2)[x,y]^\top,
  \end{equation}
  then $S^2\setminus\set{\pm N}$ is homeomorphic to $E^2\setminus\set{O}$
  under $\tau$.
  Then note that $D^2\setminus\set{O}\subset\E^2\setminus\set{O}$,
  and $f_t:[x,y]^\top\mapsto(1+(2\cot\tfrac\theta2-1)t)[x,y]^\top$, ($t\in[0,1]$)
  is a homotopy between $\pi(T(G))$ and $\tau(T(G))$.

  Most importantly, if an edge $\tau(v_i)\tau(v_j)\in G'$ contain $O$,
  then both $\tau(\arc{v_iv_j})$ and $\pi(\arc{v_iv_j})$
  are line segments
  because $\tau$ projects longitudinal lines onto straight lines.
  And so, $\tau(\arc{v_iv_j})=\tau(v_i)\tau(v_j)$.
  Otherwise, if an edge $\tau(v_i)\tau(v_j)$ does not contain $O$,
  then the region $\Omega_i$ bounded by $\tau(v_i)\tau(v_j)$
  and $\tau(\arc{v_iv_j})$ has a simply connected preimage under $\tau$,
  implying that $O\not\in\Omega_i$.
  This paragraph proved that $G$ and $G'$ are in identified equivalence classes
  in $\pi_1(S^2\setminus\set{\pm Q})\cong\pi_1(D^2\setminus\set{O})$.
\end{proof}

The reader is reminded that]
the specific choice of $\tau$ and $\pi$ are significant
but not necessary unique.
Note, also, that $S^1$
is a contraction of both $S^2\setminus\set{\pm Q}$
and $D^2\setminus\set{O}$
so the two spaces must be homotopic \citep{Hatcher2001Algebraic}.

\subsection{Shearing-based projection}

Out of curiosity,
we explored what would happen
if the axis $\pm Q$ is made parallel to the $+z$-axis
not through a rotation but through shearing.
After all, $Q$-north rotation is not the only way to align the $Q$-axis
with a coordinate axis.
In fact, the $Q$-axis
could be aligned with its closest coordinate axis
through transformations $T_x$, $T_y$, and $T_z$.
Given $Q(a,b,c)$ in Cartesian coordinates,
this choice is made as follows:
If $\abs{a}$ is the largest among $\abs{a}$, $\abs{b}$, and $\abs{c}$,
then $T_x$ aligns the $Q$-axis with the $x$-axis.
Or, if $\abs{b}$ is the largest,
then $T_y$ aligns the $Q$-axis with the $y$-axis.
Or, if $\abs{c}$ is the largest,
then $T_z$ aligns the $Q$-axis with the $z$-axis.
We define the shearing transformations as:
\begin{align}
  T_x(x,y,z)&=(x,y-\tfrac bax,z-\tfrac cax), \\
  T_y(x,y,z)&=(x-\tfrac aby,y,z-\tfrac cby), \\
  T_z(x,y,z)&=(x-\tfrac acz,y-\tfrac bcz,z).
\end{align}
Note that the denominators $a$, $b$, and $c$ are never zero when used
because
\begin{equation}
  (\max\set{\abs{a},\abs{b},\abs{c}})^2=\max\set{a^2,b^2,c^2}\geqslant\frac{a^2+b^2+c^2}{3}=\frac13.
\end{equation}%
Regardless of the shearing, let $(x',y',z')$ be the transformed coordinates.
\begin{itemize}
\item If $T_x$ is used, we feed $(y',z')$ or $(z',y')$ into PiP if $a>0$ or $a<0$.
\item If $T_y$ is used, we feed $(z',x')$ or $(x',z')$ into PiP if $b>0$ or $b<0$.
\item If $T_z$ is used, we feed $(x',y')$ or $(y',x')$ into PiP if $c>0$ or $c<0$.
\end{itemize}
The transposition of coordinates is necessary
to account for orientation changes due to the projection operation;
otherwise, the result of the PiP algorithm is exactly opposite of the desired.
The disambiguation of $Q$ and an offending side $v_iv_j$
is settled using \Cref{prop:discrimination},
by checking if $\protect\vector{OQ}\cdot\protect\vector{OM}>0$,
where $M$ is the midpoint of $v_iv_j$.
Note that Line 10 of \Cref{alg:primary} is a special case of this test.
This alternative algorithm is noted in \Cref{alg:shearing},
and we move to justify its correctness.

Let $\arc{AB}\in S^2$ be a side of a BAE spherical polygon $G$,
$\pi\circ T_d$ be the shearing-based projection in use (with $d=x,y,z$).
Further let $A'=(\pi\circ T)(A)$ and $B'=(\pi\circ T)(B)$ be the projected points,
$A'B'$ be a side of polygon $G'$,
and $\arc{A'B'}=(\pi\circ T)(\arc{AB})$ be the projection of the arc $\arc{AB}$.
Note that $\arc{A'B'}$ is a curve segment and not necessarily an arc,
despite our choice of notation.

\begin{prop}[Longitude-to-line]
  $O\in A'B'$ if and only if $Q\in\arc{AB}$.
\end{prop}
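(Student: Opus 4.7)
The plan hinges on two observations about the shearing-based projection.

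First, each shearing $T_d$ is a linear map on all of $\E^3$, and the composition $\pi\circ T_d:\E^3\to\E^2$ has kernel equal to the line $\mathrm{span}(Q)$ through the origin and $\pm Q$. Concretely, for $d=z$ the composite sends $(x,y,z)$ to $(x-\tfrac{a}{c}z,\,y-\tfrac{b}{c}z)$, which vanishes iff $(x,y,z)=\tfrac{z}{c}(a,b,c)$, \textie, iff $(x,y,z)$ is a scalar multiple of $Q$; the $d=x,y$ cases are symmetric. The first step of the proof is therefore to record the kernel computation and conclude $(\pi\circ T_d)^{-1}(O)=\mathrm{span}(Q)$.

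Second, because $\pi\circ T_d$ is linear and $A'B'$ is by definition the straight segment joining $A'=(\pi\circ T_d)(A)$ and $B'=(\pi\circ T_d)(B)$, $A'B'$ equals the image of the Euclidean chord $\overline{AB}\subset\E^3$ under $\pi\circ T_d$. Consequently, $O\in A'B'$ iff the chord $\overline{AB}$ meets $\mathrm{span}(Q)$.

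The remaining step is to translate this intersection condition into a statement about $Q$ and the arc. By \Cref{corr:arc} the side $\arc{AB}$ is less than a semicircle, so $A,B$ are linearly independent and $\arc{AB}$ is exactly the intersection of $S^2$ with the closed cone $\{\alpha A+\beta B:\alpha,\beta\ge 0\}$. A chord point $\alpha A+\beta B$ (with $\alpha+\beta=1$, $\alpha,\beta\ge 0$) lies on $\mathrm{span}(Q)$ iff $Q\in\mathrm{span}\{A,B\}$ and the coordinates of $Q$ in the basis $\{A,B\}$ share a common sign; after rescaling to $S^2$, same-sign coordinates are precisely the condition $Q\in\arc{AB}$ (both positive) or $-Q\in\arc{AB}$ (both negative).

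The main obstacle is then collapsing this inherently $\pm Q$-symmetric disjunction to just $Q\in\arc{AB}$. I would appeal to the BAE hypothesis on $G$ together with the section's standing assumption $\pm Q\notin\gamma$, which rules out $-Q\in\arc{AB}$ whenever $Q\notin\arc{AB}$; any residual ambiguity at the boundary (when $O\in A'B'$ cannot of itself distinguish $\pm Q$) is resolved separately by \Cref{prop:discrimination}, as noted in the algorithm description.
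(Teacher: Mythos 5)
Your proof is correct in substance and takes a genuinely cleaner route than the paper's. The paper argues synthetically: if $Q\in\arc{AB}$, the plane $OAB$ maps under $T_d$ to a plane containing the $d$-axis, whose projection is a line through $O$; collinearity is upgraded to betweenness by repeating the argument on the sub-arcs $\arc{AQ}$ and $\arc{QB}$, and the converse is dismissed with ``the steps are reversible.'' You replace all of this with the single observation that $\pi\circ T_d$ is linear with kernel $\mathrm{span}(Q)$, so that $A'B'$ is the image of the chord $\overline{AB}$ and $O\in A'B'$ iff the chord meets $\mathrm{span}(Q)$; the cone description of a sub-semicircular arc (licensed by \Cref{corr:arc}) then finishes the translation. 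This buys a rigorous converse, which the paper does not really supply, and it exposes something the paper's phrasing hides: the biconditional that actually follows is $O\in A'B'\iff(Q\in\arc{AB}\text{ or }-Q\in\arc{AB})$. The paper itself tacitly concedes this --- the algorithm's comment reads ``True if $\pm Q\in\boundary G$'' and the disambiguation via \Cref{prop:discrimination} exists precisely to separate the two cases --- so the proposition as literally stated is the imprecise one, not your analysis.

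Your one misstep is the final patch: the BAE hypothesis does \emph{not} rule out $-Q\in\arc{AB}$ while $Q\notin\arc{AB}$; it only forbids $Q$ and $-Q$ from lying on $\boundary G$ simultaneously, and the standing assumption $\pm Q\notin\gamma$ would make both sides of the equivalence vacuously false rather than rescue the stated form. The honest fix is to restate the proposition with the $\pm Q$ disjunction (as your argument proves it) and let \Cref{prop:discrimination} do the disambiguation downstream, exactly as the implementation already does.
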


\begin{proof}
  Assume that $Q\in\arc{AB}$. Then $OAB$ determines a plane $\alpha$
  and $(\pi\circ T_d)(\alpha)$ contains the $d$-axis.
  The projection of $(\pi\circ T_d)(\alpha)$ onto the coordinate plane
  perpendicular to the $d$-axis is a line, so $O$ is collinear with $A'B'$.
  To show that $O\in A'B'$, repeat the same argument
  with $Q\in\arc{AQ}$ and $Q\in\arc{BQ}$,
  obtaining that $O$ is collinear with $A'O$ and $B'O$, hence $O\in A'B'$.

  In the reverses direction, note that the steps above are reversible,
  especially note that the intersection of a plane with a sphere
  is an arc of great circle.
\end{proof}

\begin{prop}[Homotopy preservation]
  The region $R\subset\E^2$ bounded by $A'B'$ and $\arc{A'B'}$
  does not contain $O$ in its interior.
\end{prop}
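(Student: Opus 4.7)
The plan is to exploit the linearity of both the shear $T_d$ and the coordinate projection $\pi$ to obtain an explicit point-by-point relation between the projected arc $\arc{A'B'}$ and the projected chord $A'B'$, showing that $R$ sits ``outward'' from $O$ in a radial sense.

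Since $AB$ is a side of a BAE polygon, \Cref{corr:arc} gives $A \neq -B$, so the shorter great-circle arc admits the unit-sphere parametrization $s(t) = C(t)/\|C(t)\|$ with $C(t) = (1-t)A + tB$ for $t \in [0,1]$. Applying the linear map $\pi \circ T_d$ and using linearity yields
\begin{equation}
  (\pi \circ T_d)(s(t)) \;=\; \frac{1}{\|C(t)\|}\bigl((1-t)A' + tB'\bigr),
\end{equation}
so each point of $\arc{A'B'}$ is a positive scalar multiple, by $\lambda(t) := 1/\|C(t)\|$, of the corresponding point of the straight segment $A'B'$. The triangle inequality forces $\|C(t)\| \leq 1$, hence $\lambda(t) \geq 1$; the arc projection lies along the same rays from $O$ as the chord, and no closer to $O$.

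Consequently $R$ admits the radial parametrization $R = \{\mu\cdot((1-t)A' + tB') : t \in [0,1],\ \mu \in [1, \lambda(t)]\}$. If $O \not\in A'B'$, then $(1-t)A' + tB' \neq O$ for all $t \in [0,1]$, and since $\mu \geq 1 > 0$ we deduce $O \not\in R$ outright. If instead $O \in A'B'$, the preceding Longitude-to-line proposition forces $Q \in \arc{AB}$, so the plane $OAB$ contains the $Q$-axis; after shearing, $T_d$ maps this plane to one containing the $d$-axis, and $\pi$ collapses it to a line through $O$. Both $A'B'$ and $\arc{A'B'}$ then lie on this single line, so $R$ is contained in a line and has empty planar interior. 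In both cases $O \not\in \interior R$.

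The main obstacle I anticipate is exactly this degenerate case: once the chord passes through $O$, the two bounding curves coincide onto a single line and the ``region bounded by them'' has to be read as a closed set whose planar interior is empty rather than as a genuine 2D disk. A secondary check is that neither endpoint projects to $O$ (i.e., $A', B' \neq O$), so that the radial parametrization is non-degenerate at the endpoints; this follows from the standing assumption $\pm Q \not\in \partial G$ ruling out $A, B = \pm Q$.
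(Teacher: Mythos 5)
Your proof is correct, but it takes a genuinely different route from the paper's. The paper argues by contradiction: assuming $O\in\interior R$, it extends the ray $A'O$ until it meets $\arc{A'B'}$ at a point $C'$, pulls $C'$ back to a point $C\in\arc{AB}$, and applies the longitude-to-line proposition twice (once to the sub-arc $\arc{AC}$ to deduce $Q\in\arc{AB}$, once in the forward direction to force $O\in A'B'\subset\boundary R$), reaching a contradiction. You instead exploit the linearity of $\pi\circ T_d$ to show that the projected arc is a pointwise radial dilation of the projected chord by factors $\lambda(t)=1/\|(1-t)A+tB\|\geqslant 1$, from which the non-degenerate case is immediate and the degenerate case ($O\in A'B'$) collapses $R$ onto a line with empty planar interior. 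Your computation is more self-contained and arguably stronger: it sidesteps the paper's slightly under-justified step that the extension of $A'O$ must exit $R$ through the arc rather than the chord, it handles the degenerate boundary case more carefully than the paper's terse ``so $O\in\boundary R$,'' and as a by-product it essentially re-proves the longitude-to-line proposition (indeed revealing that the precise condition there is $\pm Q\in\arc{AB}$, since the kernel of $\pi\circ T_d$ is the full $Q$-axis). The one informality on your side is the unproved identification of $R$ with the union of radial segments $\set{\mu\,((1-t)A'+tB')}$; this does hold because the radial relation makes the chord-plus-arc curve a Jordan curve whose inside is swept by exactly those segments (and the cases where $O$ lies on the line through $A'B'$ degenerate to one-dimensional sets), but a sentence to that effect would close the loop. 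This is a minor gap in exposition, not in substance.
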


\begin{proof}
  By contradiction.
  Assume that $O\in\interior R$.
  Then extend $A'O$ to intersect $\arc{A'B'}$ at $C'$.
  Because $C'\in A'B'$, there exists $C\in\arc{AB}$ with $C'$ as its projection.
  Because $O\in A'C'$, we know that $Q\in\arc{AC}\subset\arc{AB}$,
  forcing $O\in A'B'$, so $O\in\boundary R$.
\end{proof}

The two propositions above
confirm that shearing-based reduction
preserves homotopy classes before and after $\pi\circ T_d$,
just like rotation-based reduction.

\section{Implementation}

Our proposed rotation-based reduction algorithm is listed in \Cref{alg:primary},
which calls the modified PiP algorithm listed in \Cref{alg:secondary}.
Note that the PiP algorithm by \cite{Sunday2021Practical}
is easily modified to detect $O\in\partial G'$,
which is crucial for the AE-based \Cref{alg:primary}.
As such, we incorporated a test for boundary points,
which may be relaxed in practice to account for floating-point arithmetic.

\begin{algorithm}
\caption{Reduction algorithm for spherical $n$-gons
         that satisfy the antipode-excluding property.}
\label{alg:primary}
\begin{algorithmic}[1]
\Require $[x_i,y_i,z_i]^\top$, $i=1\runsto n$ \Comment{Cartesian coordinates of vertices of $G$}
\Require $(\theta,\phi)$ \Comment{Polar and azimuthal angles of $Q$}
\State $\mathbf k\gets[k_x,k_y,k_z]^\top
                 \gets[\sin\phi,{-}\cos\phi,0]^\top$ \Comment{Axis of rotation}
\State $\mathbf K\gets\begin{bmatrix}0&-k_z&k_y\\k_z&0&-k_x\\-k_y&k_x&0\end{bmatrix}$ \Comment{Cross-product matrix}
\State $\mathbf R\gets\mathbf I+(\sin\theta)\mathbf K+(1-\cos\theta)\mathbf K^2$ \Comment{Rodrigue's rotation (matrix form)}
\For{$i=1\runsto n$}
  \State $[x_i',y_i',z_i']^\top\gets\mathbf R[x_i,y_i,z_i]^\top$
  \Comment{May use $3\times n$ matrix instead}
\EndFor
\State (State, $i$) $\gets$ \Cref{alg:secondary} with $[x_i',y_i']^\top$, ($i=1\runsto n$) required
\Comment{PiP}

\If{State is $O\in\boundary G'$} \Comment{True if $\pm Q\in\boundary G$}
  \State $j\gets i+1$ if $i<n$ and $1$ if $i=n$ \Comment{Next vertex index, cyclic}
  \State $\bar z\gets\tfrac12(z_i'+z_j')$ \Comment{\Cref{prop:discrimination}}
  \If{$\bar z>0$}\Comment{Is said midpoint above $z$-axis?}
    \State \Return $(Q\in\boundary G,i)$ \Comment{$i$-th side contains $Q$}
  \Else
    \State \Return $(Q\not\in G,\text{\textbf{undefined}})$\Comment{$i$-th side contains $-Q$, apply AE}
  \EndIf
\ElsIf{State is $O\in\interior G'$}
  \State \Return $(Q\in\interior G,i)$ \Comment{$i$ is \textwn}
\Else\Comment{State is $O\not\in G'$}
  \State \Return $(Q\not\in G,i)$ \Comment{$i$ is \textwn}
\EndIf
\end{algorithmic}
\end{algorithm}

Our alternative, shearing-based algorithm is listed in \Cref{alg:shearing},
which also calls the modified PiP algorithm listed in \Cref{alg:secondary}.

\begin{algorithm}
\caption{Shearing-based reduction algorithm for spherical $n$-gons
         that satisfy the antipode-excluding property.}
\label{alg:shearing}
\begin{algorithmic}[1]
\Require $[x_i,y_i,z_i]^\top$, $i=1\runsto n$ \Comment{Cartesian coordinates of vertices of $G$}
\Require $(a,b,c)$ \Comment{Cartesian coordinates of $Q$}
\If{$\abs{a}\geqslant\abs{b}$ and $\abs{a}\geqslant\abs{c}$} \Comment{Align with $x$-axis?}
  \For{$i=1\runsto n$}
    $[x_i',y_i',z_i']^\top\gets[x_i,y_i-\tfrac bax_i,z_i-\tfrac cax_i]^\top$\Comment{$x$-shearing}
  \EndFor
  \If{$a>0$}
    (State, $i$) $\gets$ \Cref{alg:secondary} with $[y_i',z_i']^\top$, ($i=1\runsto n$)
  \Else\ 
    (State, $i$) $\gets$ \Cref{alg:secondary} with $[z_i',y_i']^\top$, ($i=1\runsto n$)
  \EndIf
\ElsIf{$\abs{b}\geqslant\abs{c}$} \Comment{Align with $y$-axis?}
  \For{$i=1\runsto n$}
    $[x_i',y_i',z_i']^\top\gets[x_i-\tfrac aby_i,y_i,z_i-\tfrac cby_i]^\top$\Comment{$y$-shearing}
  \EndFor
  \If{$b>0$}
    (State, $i$) $\gets$ \Cref{alg:secondary} with $[z_i',x_i']^\top$, ($i=1\runsto n$)
  \Else\
    (State, $i$) $\gets$ \Cref{alg:secondary} with $[x_i',z_i']^\top$, ($i=1\runsto n$)
  \EndIf
\Else \Comment{Align with $z$-axis}
  \For{$i=1\runsto n$}
    $[x_i',y_i',z_i']^\top\gets[x_i-\tfrac acz_i,y_i-\tfrac bcz_i,z_i]^\top$\Comment{$z$-shearing}
  \EndFor
  \If{$c>0$}
    (State, $i$) $\gets$ \Cref{alg:secondary} with $[x_i',y_i']^\top$, ($i=1\runsto n$)
  \Else\
    (State, $i$) $\gets$ \Cref{alg:secondary} with $[y_i',x_i']^\top$, ($i=1\runsto n$)
  \EndIf
\EndIf
\If{State is $O\in\boundary G'$} \Comment{True if $\pm Q\in\boundary G$}
  \State $j\gets i+1$ if $i<n$ and $1$ if $i=n$ \Comment{Next vertex index, cyclic}
  \If{$[a,b,c][x_i'+x_j',y_i'+y_j',z_i'+z_j']^\top>0$}\Comment{\Cref{prop:discrimination}}
    \State \Return $(Q\in\boundary G,i)$ \Comment{$i$-th side contains $Q$}
  \Else
    \State \Return $(Q\not\in G,\text{\textbf{undefined}})$\Comment{$i$-th side contains $-Q$, apply AE}
  \EndIf
\ElsIf{State is $O\in\interior G'$}
  \State \Return $(Q\in\interior G,i)$ \Comment{$i$ is \textwn}
\Else\Comment{State is $O\not\in G'$}
  \State \Return $(Q\not\in G,i)$ \Comment{$i$ is \textwn}
\EndIf
\end{algorithmic}
\end{algorithm}

Finally, we present as \Cref{alg:secondary}
the efficient \textwn-PiP algorithm documented in \cite[pp.~44--45]{Sunday2021Practical}.
Here, we incorporated an extra boundary-point test.

\begin{algorithm}
\caption{The \textwn-PiP algorithm adapted from \cite[pp.~44--45]{Sunday2021Practical}.}
\label{alg:secondary}
\begin{algorithmic}[1]
\Require $[x_i,y_i]^\top$, $i=1\runsto n$\Comment{Coordinates of vertices of polygon $G'$}
\State $\textwn\gets 0$
\For{$i=1\runsto n$}
  \State $j\gets i+1$ if $i<n$ and $1$ if $i=n$\Comment Next vertex index, cyclic
  \If{$x_iy_j-y_ix_j=0$ and $x_ix_j\leq0$}\Comment{Origin on the $i$-th edge?}
    \State\Return $(O\in\boundary G', i)$ \Comment{Tags output with edge index}
  \EndIf
  \If{$y_i\leq0$}
    \If{$y_j>0$}
      \If{$x_i(y_j-y_i)-(x_j-x_i)y_i>0$}
        \State $\textwn\gets\textwn+1$\Comment{Crossing $+x$, CCW}
      \EndIf
    \EndIf
  \Else
    \If{$y_j\leq0$}
      \If{$x_i(y_j-y_i)-(x_j-x_i)y_i<0$}
        \State $\textwn\gets\textwn-1$\Comment{Crossing $+x$, CW}
      \EndIf
    \EndIf
  \EndIf
  \If{$\textwn>0$}
    \State \Return $(O\in\interior G', \textwn)$ \Comment{Tags output with winding number}
  \Else
    \State \Return $(O\not\in G', \textwn)$ \Comment{Tags output with winding number}
  \EndIf
\EndFor
\end{algorithmic}
\end{algorithm}

\section{Test cases}

We design two types of test cases.
In the first type,
a test point $Q\in S^2$ lies on the extension
of some side of the spherical $n$-gon $G$.
In the second type,
the curve winds around $Q$ multiple times.
We run the rotation-based reduction and shearing-based reduction for each test.

\subsection{Antipode of test point is a boundary point}

\paragraph{Type I.} Let $Q(0,0)$ be the north pole,
and $G$ be the spherical polygon
with vertices $v_1(\tfrac\pi2,0)$, $v_2(\pi,0)$, and $v_3(\tfrac\pi2,\tfrac\pi2)$.
Note that $G$ is a closed octant, and is thus AE per \Cref{prop:hemisphere}.
Because $-Q=v_2\in G$, we must have $Q\not\in G$ by \Cref{defn:ae}.
Results in \Cref{sec:code} confirm so.

\begin{figure}[!hbtp]
  \centering
  \includegraphics{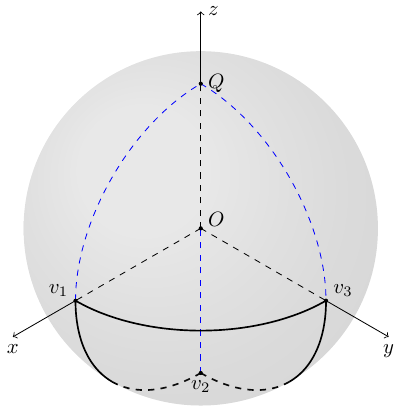}
  \caption{In test case 1, the test point is antipodal to a vertex of the spherical $3$-gon.
           This $3$-gon is a beloved equiangular right triangle.}
  \label{fig:1}
\end{figure}

\subsection{Test point is a boundary point}

\paragraph{Type I.}
Let $Q(0,0)$ be the north pole,
and $G$ be the spherical polygon
with vertices $v_1(\tfrac\pi6,0)$, $v_2(\tfrac\pi4,\tfrac\pi2)$, and $v_3(\tfrac\pi6,\pi)$.
Results in \Cref{sec:code}  indeed confirm that $Q\in\boundary G$.

\begin{figure}[!hbtp]
  \centering
  \includegraphics{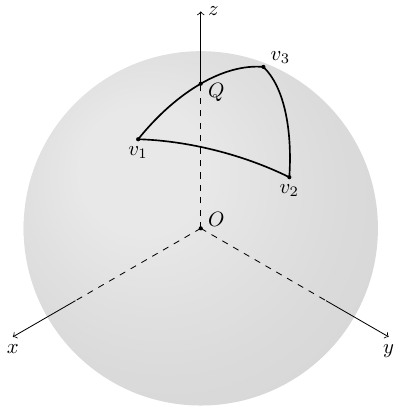}
  \caption{In test case 2, the test point falls in the interior of the third edge
           of an isosceles spherical $3$-gon.}
  \label{fig:2}
\end{figure}

\subsection{Non-simple yet orientable boundary}

\paragraph{Type II.} Let $Q(90^\circ,45^\circ)$ be a point on the equator.
We define $G$ to be a non-simple spherical polygon defined using 10 vertices:
$v_1(45^\circ,\arccos\tfrac18)$,
$v_2(60^\circ,30^\circ)$,
$v_3(90^\circ,0^\circ)$,
$v_4(120^\circ,30^\circ)$,
$v_5(90^\circ,60^\circ)$,
$v_6=v_1$,
$v_7(90^\circ,30^\circ)$,
$v_8(120^\circ,60^\circ)$,
$v_9(90^\circ,90^\circ)$,
$v_{10}(60^\circ,60^\circ)$.
Results in \Cref{sec:code} indeed confirm that
$Q\in\interior G$.
Though not exposed,
one may inspect the PiP algorithm to see that
its winding number is indeed $2$.

\begin{figure}[!hbtp]
  \centering
  \includegraphics{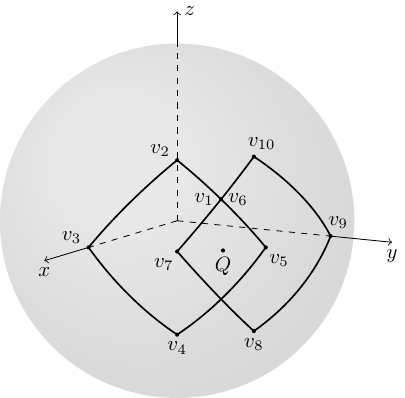}
  \caption{In test case 3, the test point falls in the interior
           of this non-simple spherical polygon.
           Count the spherical diamonds!}
  \label{fig:3}
\end{figure}

\section{Discussion}

We discuss a major concern
during the development of the reduction algorithms.
We then address the positioning of this work amongst prior work,
end-user friendless,
and offer some final conclusions.

\subsection{Spherical lessons}

Projections of spherical $n$-gons onto a plane
are not necessarily polygons.
The mismatch in homotopy classes between $(\pi\circ T)(G)$ and $G'$
could easily occur without careful choice of $\pi$ and $T$.
If $\pi$ is homotopic to the south stereographic projection $\tau$,
then no mismsatch will occur.

\subsection{Interfacing with prior work}

Because of the simplicity of \textwn,
we second the opinion that
\textwn-SPiP algorithms should be preferred over
\textcn-SPiP algorithms,
unless other factors are at play.
Our algorithm makes no use of known interior points (KIP),
but should one be known,
it could be cached to enable other KIP-requiring \textcn-algorithms.

For batch testing,
\textie, multiple test points and multiple spherical polygons,
note that the same test point induces the same rotation $T$,
so a single test point
could be matched against multiple spherical polygons at once.
Recall that the rotation $T$ is a linear transformation.
We refrain from further complications
and refer the reader to more recent developments
in $S^2$-friendly data structures,
such as the SS-tree pioneered in \cite{White1996Similarity}.

\subsection{User awareness}

While this work makes no attempt to generalize
into more general spherical polygons,
attached code shown in \Cref{sec:code},
provides documentation to alert users
of a more friendly property (\Cref{prop:hemisphere}) than BAE.
This different emphasis could serve as a positive example
for interfacing between package developers and users.

\subsection{Conclusion}

We have proposed a new reduction algorithm
to transform the spherical point-in-polygon (SPiP) problem
into the point-in-polygon (PiP) problem
for boundary antipode-excluding (BAE) spherical polygons,
a spherical $n$-gon whose boundary intersects trivially
with with its antipode.
Spherical polygons fully contained within an open hemisphere
is automatically BAE.
Our reduction algorithms,
rotation-based and shearing-based,
preserve homotopy classes and
passes specially designed test cases that could fail in prior work.
We have successfully designed two winding-number (\textwn-)SPiP algorithms
for BAE spherical polygons,
both powered by an efficient \textwn-PiP algorithm,

\bibliographystyle{plainnat}
\bibliography{pisp}

\appendix
\section{Code Listings}
\label{sec:code}

The test cases are encoded in the test script below.

\lstinputlisting[language=Matlab]{Source/test.m}

The output below matches our expectation.

\lstinputlisting{Source/output.txt}

\subsection{SPiP based on Rotation}

\lstinputlisting[language=Matlab]{Source/spip.m}

\subsection{SPiP based on Shearing}

\lstinputlisting[language=Matlab]{Source/spipsh.m}

\subsection{Sunday's efficient \texorpdfstring{\textwn}{wn}-PiP algorithm with modification}

\lstinputlisting[language=Matlab]{Source/pip.m}

\end{document}